\theoremstyle{plain}
\newtheorem{thm}{\protect\theoremname}
  \theoremstyle{plain}
  \newtheorem{cor}{\protect\corollaryname}
  \theoremstyle{plain}
  \newtheorem{lem}{\protect\lemmaname}
\DeclareMathOperator*{\argmin}{argmin}
  \providecommand{\lemmaname}{Lemma}
\providecommand{\corollaryname}{Corollary}
\providecommand{\theoremname}{Theorem}
\begin{document}
\global\long\def\mb#1{\boldsymbol{#1}}
\global\long\def\mbb#1{\mathbb{#1}}
\global\long\def\mc#1{\mathcal{#1}}
\global\long\def\mcc#1{\mathscr{#1}}
\global\long\def\mr#1{\mathrm{#1}}
\global\long\def\msf#1{\mathsf{#1}}
\global\long\def\E{\mb{\msf E}}
\global\long\def\P{\mb{\msf P}}
\global\long\def\tbullet{\tiny{\mbox{\ensuremath{\bullet}}}}

\title{Efficient Compressive Phase Retrieval\\
with Constrained Sensing Vectors}

\author{
Sohail~Bahmani,\quad Justin~Romberg \\
School of Electrical and Computer Engineering.\\
Georgia Institute of Technology\\
Atlanta, GA 30332 \\
\texttt{\{sohail.bahmani,jrom\}@ece.gatech.edu}\\
}
\maketitle
\begin{abstract}
We propose a robust and efficient approach to the problem of compressive phase retrieval
in which the goal is to reconstruct a sparse vector from the magnitude
of a number of its linear measurements. The proposed framework relies
on constrained sensing vectors and a two-stage reconstruction method
that consists of two standard convex programs that are
solved sequentially.

In recent years, various methods are proposed for compressive phase retrieval, but they have suboptimal sample complexity or lack robustness guarantees. The main obstacle has been that there is no
straightforward convex relaxations for the type of structure in the
target. Given a set of underdetermined measurements, there is a standard
framework for recovering a sparse matrix, and a standard framework
for recovering a low-rank matrix. However, a general, efficient method
for recovering a jointly sparse and low-rank matrix has remained
elusive.

Deviating from the models with generic measurements, in this paper we show that if the sensing vectors are chosen at random
from an incoherent subspace, then the low-rank and sparse structures
of the target signal can be effectively decoupled. We show that a
recovery algorithm that consists of a low-rank recovery stage followed
by a sparse recovery stage will produce an accurate estimate of the
target when the number of measurements is $\msf O(k\,\log\frac{d}{k})$,
where $k$ and $d$ denote the sparsity level and the dimension of
the input signal. We also evaluate the algorithm through numerical
simulation.
\end{abstract}

\section{Introduction}

\subsection{Problem setting}

The problem of \emph{Compressive Phase Retrieval} (CPR) is generally
stated as the problem of estimating a $k$-sparse vector $\mb x^{\star}\in\mbb R^{d}$
from noisy measurements of the form 
\begin{align}
y_{i} & =\left|\left\langle \mb a_{i},\mb x^{\star}\right\rangle \right|^{2}+z_{i}\label{eq:measurements}
\end{align}
 for $i=1,2,\dotsc,n$, where $\mb a_{i}$ is the sensing vector and
$z_{i}$ denotes the additive noise. In this paper, we study the CPR
problem with specific sensing vectors $\mb a_{i}$ of the form 
\begin{align}
\mb a_{i} & =\mb{\varPsi}^{\msf T}\mb w_{i},\label{eq:nested-sensing}
\end{align}
 where $\mb{\varPsi}\in\mbb R^{m\times d}$ and $\mb w_{i}\in\mbb R^{m}$
are known. In words, the measurement vectors live in a fixed low-dimensional subspace (i.e, the row space of $\mb{\varPsi}$). These types of measurements can be applied in imaging systems that have control over how the scene is illuminated; examples include systems that use structured illumination with a spatial light modulator or a \emph{scattering medium} \cite{Bertolotti_Non-invasive_2012,Liutkus_Imaging_2014}.

By a standard lifting of the signal $\mb x^{\star}$ to $\mb X^{\star}=\mb x^{\star}\mb x^{\star\msf T}$,
the quadratic measurements (\ref{eq:measurements}) can be expressed
as 
\begin{align}
y_{i} & \begin{aligned}=\left\langle \mb a_{i}\mb a_{i}^{\msf T},\mb X^{\star}\right\rangle +z_{i} & =\left\langle \mb{\varPsi}^{\msf T}\mb w_{i}\mb w_{i}^{\msf T}\mb{\varPsi},\mb X^{\star}\right\rangle +z_{i}.\end{aligned}
\label{eq:lifted-measurements}
\end{align}
 With the linear operator $\mc W$ and $\mc A$ defined as 
\begin{align*}
\mc W: & \mb B\mapsto\left[\left\langle \mb w_{i}\mb w_{i}^{\msf T},\mb B\right\rangle \right]_{i=1}^{n} &  & \text{and} &  & \mc A:\mb X\mapsto\mc W\left(\mb{\varPsi}\mb X\mb{\varPsi}^{\msf T}\right),
\end{align*}
 we can write the measurements compactly as 
\begin{align*}
\mb y & =\mc A\left(\mb X^{\star}\right)+\mb z.
\end{align*}
Our goal is to estimate the sparse, rank-one, and positive semidefinite
matrix $\mb X^{\star}$ from the measurements (\ref{eq:lifted-measurements}),
which also solves the CPR problem and provides an estimate for the
sparse signal $\mb x^{\star}$ up to the inevitable global phase ambiguity.

\paragraph*{Assumptions}

We make the following assumptions throughout the paper.
\begin{enumerate}[label=\bfseries{A\arabic*.}, ref=A\arabic*]
\item \label{asm:A1} The vectors $\mb w_{i}$ are independent and have
the standard Gaussian distribution on $\mbb R^{m}$: $\mb w_{i}\sim\msf N\left(\mb 0,\mb I\right).$
\item \label{asm:A2} The matrix $\mb{\varPsi}$ is a\emph{ restricted isometry}
matrix for $2k$-sparse vectors and for a constant $\delta_{2k}\in\left[0,1\right]$.
Namely, it obeys
\begin{align}
\left(1-\delta_{2k}\right)\left\Vert \mb x\right\Vert _{2}^{2} & \begin{aligned}\leq\left\Vert \mb{\varPsi}\mb x\right\Vert _{2}^{2} & \leq\left(1+\delta_{2k}\right)\left\Vert \mb x\right\Vert _{2}^{2},\end{aligned}
\label{eq:RIP}
\end{align}
 for all $2k$-sparse vectors $\mb x\in\mbb R^{d}$.
\item \label{asm:A3} The noise vector $\mb z$ is bounded as $\left\Vert \mb z\right\Vert _{2}\leq\varepsilon.$
\end{enumerate}
As will be seen in Theorem \ref{thm:master-theorem} and its proof
below, the Gaussian distribution imposed by the assumption \ref{asm:A1}
will be used merely to guarantee successful estimation of a rank-one
matrix through trace norm minimization. However, other distributions
(e.g., uniform distribution on the unit sphere) can also be used to
obtain similar guarantees. Furthermore, the restricted isometry condition
imposed by the assumption \ref{asm:A2} is not critical and can be
replaced by weaker assumptions. However, the guarantees obtained under
these weaker assumptions usually require more intricate derivations,
provide weaker noise robustness, and often do not hold uniformly for
all potential target signals. Therefore, to keep the exposition simple
and straightforward we assume (\ref{eq:RIP}) which is known to hold
(with high probability) for various ensembles of random matrices (e.g.,
Gaussian, Rademacher, partial Fourier, etc). Because in many scenarios
we have the flexibility of selecting $\mb{\varPsi}$, the assumption
(\ref{eq:RIP}) is realistic as well.

\paragraph*{Notation}

Let us first set the notation used throughout the paper. Matrices
and vectors are denoted by bold capital and small letters, respectively.
The set of positive integers less than or equal to $n$ is denoted
by $\left[n\right]$. The notation $f=\msf O\left(g\right)$ is used
when $f=cg$ for some absolute constant $c>0$. For any matrix $\mb M$,
the Frobenius norm, the nuclear norm, the entrywise $\ell_{1}$-norm,
and the largest entrywise absolute value of the entries are denoted
by $\left\Vert \mb M\right\Vert _{F}$, $\left\Vert \mb M\right\Vert _{*}$,
$\left\Vert \mb M\right\Vert _{1}$, and $\left\Vert \mb M\right\Vert _{\infty}$,
respectively. To indicate that a matrix $\mb M$ is positive semidefinite
we write $\mb M\succcurlyeq\mb 0$.

\subsection{Contributions}

The main challenge in the CPR problem in its general formulation is
to design an accurate estimator that has optimal sample complexity
and computationally tractable. In this paper we address this challenge
in the special setting where the sensing vectors can be factored as
(\ref{eq:nested-sensing}). Namely, we propose an algorithm that 
\begin{itemize}
\item provably produces an accurate estimate of the lifted target $\mb X^{\star}$
from only $n=\msf O\left(k\,\log\frac{d}{k}\right)$ measurements,
and
\item can be computed in polynomial time through efficient convex optimization
methods.
\end{itemize}

\subsection{Related work}

Several papers including \cite{Candes_PhaseLift_2013,Candes_Solving_2014,Kueng_Low-rank_2014,Tropp_Convex_2014,Waldspurger_Phase_2015}
have already studied the application of convex programming for (non-sparse)
phase retrieval (PR) in various settings and have established estimation
accuracy through different mathematical techniques. These phase retrieval
methods attain nearly optimal sample complexities that scales with
the dimension of the target signal up to a constant factor \cite{Candes_Solving_2014,Kueng_Low-rank_2014,Tropp_Convex_2014}
or at most a logarithmic factor \cite{Candes_PhaseLift_2013}. However,
to the best of our knowledge, the exiting methods for CPR either lack accuracy and robustness guarantees or have suboptimal sample complexities. 

The problem of recovering a sparse signal from the magnitude of its
subsampled Fourier transforms is cast in \cite{Moravec_Compressive_2007}
as an $\ell_{1}$-minimization with non-convex constraints. While
\cite{Moravec_Compressive_2007} shows that a sufficient number of
measurements would grow quadratically in $k$ (i.e., the sparsity
of the signal), the numerical simulations suggest that the non-convex
method successfully estimates the sparse signal with only about $k\,\log\frac{d}{k}$
measurements. Another non-convex approach to CPR is considered in
\cite{Shechtman_Sparsity_2011} which poses the problem as finding
a $k$-sparse vector that minimizes the residual error that takes
a quartic form. A local search algorithm called GESPAR \cite{Schechtman_GESPAR_2014}
is then applied to (approximate) the solution to the formulated sparsity-constrained
optimization. This approach is shown to be effective through simulations,
but it also lacks global convergence or statistical accuracy guarantees.
An alternating minimization method for both PR and CPR is studied
in \cite{Netrapalli_Phase_2013}. This method is appealing in large
scale problems because of computationally inexpensive iterations.
More importantly, \cite{Netrapalli_Phase_2013} proposes a specific
initialization using which the alternating minimization method is
shown to converge linearly in noise-free PR and CPR. However, the
number of measurements required to establish this convergence is effectively
quadratic in $k$. In \cite{Li_Sparse_2013} and \cite{Ohlsson_CPRL_2012}
the $\ell_{1}$-regularized form of the trace minimization 
\begin{align}
\begin{aligned} & \argmin_{\mb X\succcurlyeq\mb 0} &  & \msf{trace}\left(\mb X\right)+\lambda\left\Vert \mb X\right\Vert _{1}\\
 & \text{subject to} &  & \mc A\left(\mb X\right)=\mb y
\end{aligned}
\label{eq:trace+L1}
\end{align}
 is proposed for the CPR problem. The guarantees of \cite{Ohlsson_CPRL_2012}
are based on the restricted isometry property of the sensing operator
$\mb X\mapsto\left[\left\langle \mb a_{i}\mb a_{i}^{*},\mb X\right\rangle \right]_{i=1}^{n}$
for sparse matrices. In \cite{Li_Sparse_2013}, however, the analysis
is based on construction of a \emph{dual certificate} through an adaptation
of the \emph{golfing scheme} \cite{Gross_Recovering_2011}. Assuming
standard Gaussian sensing vectors $\mb a_{i}$ and with appropriate
choice of the regularization parameter $\lambda$, it is shown in
\cite{Li_Sparse_2013} that (\ref{eq:trace+L1}) solves the CPR when
$n=\msf O\left(k^{2}\log d\right)$. Furthermore, this method fails
to recover the target sparse and rank-one matrix if $n$ is dominated
by $k^{2}$. Estimation of simultaneously structured matrices through
convex relaxations similar to (\ref{eq:trace+L1}) is also studied
in \cite{Oymak_Simultaneously_2015} where it is shown that these
methods do not attain optimal sample complexity. More recently, assuming
that the sparse target has a Bernoulli-Gaussian distribution, a \emph{generalized
approximate message passing} framework is proposed in \cite{Schniter_Compressive_2015}
to solve the CPR problem. Performance of this method is evaluated
through numerical simulations for standard Gaussian sensing matrices
which show the \emph{empirical phase transition} for successful estimation
occurs at $n=\msf O\left(k\,\log\frac{d}{k}\right)$ and also the
algorithms can have a significantly lower runtime compared to some
of the competing algorithms including GESPAR \cite{Schechtman_GESPAR_2014}
and CPRL \cite{Ohlsson_CPRL_2012}. The PhaseCode algorithm is proposed in \cite{Pedarsani_PhaseCode_2014} to solve the CPR problem with sensing vectors designed using sparse graphs and techniques adapted from coding theory. Although PhaseCode is shown to achieve the optimal sample complexity, it lacks robustness guarantees. 

While preparing the final version of the current paper, we became aware of  \cite{iwen15ro} which has independently proposed an approach similar to ours to address the CPR problem.

\section{\label{sec:Main-Results}Main Results}

\subsection{Algorithm}

We propose a two-stage algorithm outlined in Algorithm \ref{alg:TwoStageMethod}.
Each stage of the algorithm is a convex program for which various
efficient numerical solvers exists. In the first stage we solve (\ref{eq:pre-estimate})
to obtain a low-rank matrix $\widehat{\mb B}$ which is an estimator
of the matrix 
\begin{align*}
\mb B^{\star} & =\mb{\varPsi}\mb X^{\star}\mb{\varPsi}^{\msf T}.
\end{align*}
 Then $\widehat{\mb B}$ is used in the second stage of the algorithm
as the measurements for a sparse estimation expressed by (\ref{eq:estimate}).
The constraint of (\ref{eq:estimate}) depends on an absolute constant
$C>0$ that should be sufficiently large.

\begin{algorithm}
\label{alg:TwoStageMethod}\caption{}
\setlength{\abovedisplayskip}{0.5ex}
\setlength{\belowdisplayskip}{0.5ex}
\DontPrintSemicolon
\SetKwInOut{Input}{input}\SetKwInOut{Output}{output}
\Input{the measurements $\mb{y}$,   the operator $\mc{W}$, and the matrix $\mb{\varPsi}$}
\Output{the estimate $\widehat{\mb{X}}$}
Low-rank estimation stage:
\begin{align}
\begin{aligned} 
	\widehat{\mb{B}} &\in \argmin_{\mb{B}\succcurlyeq\mb 0} &  & \msf{trace}\left(\mb{B}\right)\\
 &\hspace{3ex} \text{subject to} & & \left\Vert\mc W\left(\mb{B}\right)-\mb y\right\Vert_2\leq \varepsilon
\end{aligned} \label{eq:pre-estimate}
\end{align}\;
Sparse estimation stage:
\begin{align}
\begin{aligned}  
	\widehat{\mb{X}} &\in \argmin_{\mb{X}}  & &  \left\Vert \mb{X}\right\Vert _{1}\\  
	&\hspace{3ex} \text{subject to } & &  \left\Vert\mb \varPsi\mb{X}\mb \varPsi^{\msf T}-\widehat{\mb{B}}\right\Vert_F \leq \frac{C\varepsilon}{\sqrt{n}}
\end{aligned} \label{eq:estimate}
\end{align}
\end{algorithm}

\paragraph*{Post-processing.}

The result of the low-rank estimation stage (\ref{eq:pre-estimate})
is generally not rank-one. Similarly, the sparse estimation stage
does not necessarily produce a $\widehat{\mb X}$ that is $k\times k$-sparse
(i.e., it has at most $k$ nonzero rows and columns) and rank-one.
In fact, since we have not imposed the positive semidefiniteness constraint
(i.e., $\mb X\succcurlyeq\mb 0$) in (\ref{eq:estimate}), the estimate
$\widehat{\mb X}$ is not even guaranteed to be positive semidefinite
(PSD). However, we can enforce the rank-one or the sparsity structure
in post-processing steps simply by projecting the produced estimate
on the set of rank-one or $k\times k$-sparse PSD matrices. The simple
but important observation is that projecting $\widehat{\mb X}$ onto
the desired sets at most doubles the estimation error. This fact is
shown by Lemma \ref{lem:projected-estimator} in Section \ref{sec:Proofs}
in a general setting.

\paragraph*{Alternatives.}

There are alternative convex relaxations for the low-rank estimation
and the sparse estimation stages of Algorithm (\ref{alg:TwoStageMethod}).
For example, (\ref{eq:pre-estimate}) can be replaced by its regularized
least squares analog 
\begin{align*}
\begin{aligned}\widehat{\mb B}\in & \argmin_{\mb B\succcurlyeq\mb 0} &  & \frac{1}{2}\left\Vert \mc W\left(\mb B\right)-\mb y\right\Vert _{2}^{2}+\lambda\left\Vert \mb B\right\Vert _{*},\end{aligned}
\end{align*}
 for an appropriate choice of the regularization parameter $\lambda$.
Similarly, instead of (\ref{eq:estimate}) we can use an $\ell_{1}$-regularized
least squares. Furthermore, to perform the low-rank estimation and
the sparse estimation we can use non-convex greedy type algorithms
that typically have lower computational costs. For example, the low-rank
estimation stage can be performed via the Wirtinger flow method proposed
in \cite{Candes_Phase_2014}. Furthermore, various greedy compressive
sensing algorithms such as the Iterative Hard Thresholding \cite{Blumensath_Iterative_2009}
and CoSaMP \cite{Needell_CoSaMP_2009} can be used to solve the desired
sparse estimation. To guarantee the accuracy of these compressive
sensing algorithms, however, we might need to adjust the assumption
\ref{asm:A2} to have the restricted isometry property for $ck$-sparse
vectors with $c$ being some small positive integer. 

\subsection{Accuracy guarantees }

The following theorem shows that any solution of the proposed algorithm
is an accurate estimator of $\mb X^{\star}$.
\begin{thm}
\label{thm:master-theorem}Suppose that the assumptions \ref{asm:A1},
\ref{asm:A2}, and \ref{asm:A3} hold with a sufficiently small constant
$\delta_{2k}$. Then, there exist positive absolute constants $C_{1}$,
$C_{2}$, and $C_{3}$ such that if 
\begin{align}
n & \geq C_{1}m,\label{eq:Low-rank-|Measurements|}
\end{align}
then any estimate $\widehat{\mb X}$ of the Algorithm \ref{alg:TwoStageMethod}
obeys 
\begin{align*}
\left\Vert \widehat{\mb X}-\mb X^{\star}\right\Vert _{F} & \leq\frac{C_{2}\varepsilon}{\sqrt{n}},
\end{align*}
for all rank-one and $k\times k$-sparse matrices $\mb X^{\star}\succcurlyeq\mb 0$
with probability exceeding $1-e^{-C_{3}n}$.
\end{thm}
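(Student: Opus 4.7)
Since $\mb{X}^{\star}=\mb{x}^{\star}\mb{x}^{\star\msf T}$, the target of the first stage is $\mb{B}^{\star}=\mb{\varPsi}\mb{X}^{\star}\mb{\varPsi}^{\msf T}=(\mb{\varPsi}\mb{x}^{\star})(\mb{\varPsi}\mb{x}^{\star})^{\msf T}$, a rank-one PSD matrix in $\mbb{R}^{m\times m}$, and the measurements fall under the standard PhaseLift setup with $\mb{w}_{i}\sim\msf{N}(\mb{0},\mb{I})$. The plan here is to invoke the existing robust recovery guarantees for trace minimization under Gaussian phase-retrieval measurements (e.g., \cite{Candes_Solving_2014,Kueng_Low-rank_2014}) to conclude that, when $n\geq C_{1}m$ for a sufficiently large $C_{1}$, the program~(\ref{eq:pre-estimate}) returns $\widehat{\mb{B}}$ satisfying $\|\widehat{\mb{B}}-\mb{B}^{\star}\|_{F}\leq c_{1}\varepsilon/\sqrt{n}$ uniformly in $\mb{X}^{\star}$ with probability at least $1-e^{-C_{3}n}$.

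\textbf{Stage 2 ($\ell_{1}$ minimization).} I would choose the constant $C$ in~(\ref{eq:estimate}) at least $c_{1}$ so that $\mb{X}^{\star}$ is feasible, whence $\|\widehat{\mb{X}}\|_{1}\leq\|\mb{X}^{\star}\|_{1}$. Writing $\mb{H}=\widehat{\mb{X}}-\mb{X}^{\star}$ and $T=S\times S$ with $S=\mr{supp}(\mb{x}^{\star})$, the standard splitting of $\|\mb{X}^{\star}+\mb{H}\|_{1}$ on and off $T$ yields the cone condition $\|\mb{H}_{T^{c}}\|_{1}\leq\|\mb{H}_{T}\|_{1}$, while the triangle inequality combined with feasibility of $\widehat{\mb{X}}$ gives $\|\mb{\varPsi}\mb{H}\mb{\varPsi}^{\msf T}\|_{F}\leq 2C\varepsilon/\sqrt{n}$. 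The main task is then to establish a stability estimate $\|\mb{H}\|_{F}\lesssim\|\mb{\varPsi}\mb{H}\mb{\varPsi}^{\msf T}\|_{F}$ for all $\mb{H}$ in this cone. The key ingredient is a matrix restricted isometry for $\mb{M}\mapsto\mb{\varPsi}\mb{M}\mb{\varPsi}^{\msf T}$ on matrices whose row and column supports $U,V$ both have size at most $2k$: factoring $\mb{\varPsi}\mb{M}\mb{\varPsi}^{\msf T}=\mb{\varPsi}_{U}\mb{M}_{U,V}\mb{\varPsi}_{V}^{\msf T}$, Assumption~\ref{asm:A2} applied to $\mb{\varPsi}_{U}$ and $\mb{\varPsi}_{V}$ yields $(1-\delta_{2k})^{2}\|\mb{M}\|_{F}^{2}\leq\|\mb{\varPsi}\mb{M}\mb{\varPsi}^{\msf T}\|_{F}^{2}\leq(1+\delta_{2k})^{2}\|\mb{M}\|_{F}^{2}$ on this product-support class. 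I would then decompose $\mb{H}$ into blocks whose row and column indices lie in successive groups of $k$ indices (sorted by row and column $\ell_{2}$-norms, with the leading block aligned with $S\times S$), control the off-leading blocks via the cone condition in the standard sparse-recovery fashion, and conclude $\|\mb{H}\|_{F}\leq C_{2}\varepsilon/\sqrt{n}$.

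\textbf{Post-processing and main obstacle.} Projecting $\widehat{\mb{X}}$ onto the set of rank-one, respectively $k\times k$-sparse, PSD matrices at most doubles the Frobenius error by Lemma~\ref{lem:projected-estimator}, so the claimed bound passes to any output of Algorithm~\ref{alg:TwoStageMethod}. The main technical obstacle I expect is the stability estimate in Stage 2: the naive route of treating $\mb{\varPsi}\otimes\mb{\varPsi}$ as a generic sparse sensing matrix fails, because its RIP on arbitrary $2k^{2}$-sparse matrices is \emph{not} implied by the RIP of $\mb{\varPsi}$ on $2k$-sparse vectors---only RIP on product supports $U\times V$ with $|U|,|V|\leq 2k$ is inherited. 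The decomposition of $\mb{H}$ must therefore be carefully aligned with the product structure of $\mr{supp}(\mb{X}^{\star})=S\times S$ so that the product-support matrix RIP applies block-by-block, and most of the real work in the proof lies in this adaptation of the classical $\ell_{1}$ recovery argument.
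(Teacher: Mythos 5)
Your proposal follows essentially the same route as the paper's own proof: Stage 1 is handled by citing the uniform robust PhaseLift guarantee of \cite{Kueng_Low-rank_2014} under $n\geq C_{1}m$, and Stage 2 rests on exactly the product-support restricted isometry $(1-\delta_{2k})^{2}\left\Vert \mb M\right\Vert _{F}^{2}\leq\left\Vert \mb{\varPsi}\mb M\mb{\varPsi}^{\msf T}\right\Vert _{F}^{2}\leq(1+\delta_{2k})^{2}\left\Vert \mb M\right\Vert _{F}^{2}$ (obtained in the paper by applying (\ref{eq:RIP}) to the columns of $\mb X$ and then of $\mb X^{\msf T}\mb{\varPsi}^{\msf T}$), followed by the same cone condition, disjoint $k\times k$-block decomposition of the error, and Cand\`es-style near-orthogonality argument (Lemma \ref{lem:PsiInnerProduct}). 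The technical obstacle you single out---that the block decomposition must respect product supports for the inherited RIP to apply---is real and is in fact treated only implicitly ("by construction") in the paper's proof, so your plan is, if anything, more explicit on that point.
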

The proof of Theorem \ref{thm:master-theorem} is straightforward
and is provided in Section \ref{sec:Proofs}. The main idea is first
to show the low-rank estimation stage produces an accurate estimate
of $\mb B^{\star}$. Because this stage can be viewed as a standard
phase retrieval through lifting, we can simply use accuracy guarantees
that are already established in the literature (e.g., \cite{Candes_PhaseLift_2013,Tropp_Convex_2014,Kueng_Low-rank_2014}).
In particular, we use \cite[Theorem 2]{Kueng_Low-rank_2014} which
established an error bound that holds uniformly for all valid $\mb B^{\star}$.
Thus we can ensure that $\mb X^{\star}$ is feasible in the sparse
estimation stage. Then the accuracy of the sparse estimation stage
can also be established by a simple adaptation of the analyses based
on the restricted isometry property such as \cite{Candes_Restricted_2008}. 

The dependence of $n$ (i.e., the number of measurements) and $k$
(i.e., the sparsity of the signal) is not explicit in Theorem \ref{thm:master-theorem}.
This dependence is absorbed in $m$ which must be sufficiently large
for Assumption \ref{asm:A2} to hold. Considering a Gaussian matrix
$\mb{\varPsi}$, the following corollary gives a concrete example
where the dependence of $n$on $k$ through $m$ is exposed.
\begin{cor}
\label{cor:Gaussian-A} Suppose that the assumptions of Theorem \ref{thm:master-theorem}
including (\ref{eq:Low-rank-|Measurements|}) hold. Furthermore, suppose
that $\mb{\varPsi}$ is a Gaussian matrix with iid $\msf N\left(0,\frac{1}{m}\right)$
entries and 
\begin{align}
m & \geq c_{1}k\,\log\frac{d}{k},\label{eq:Sparse-|Measurements|}
\end{align}
 for some absolute constant $c_{1}>0$. Then any estimate $\widehat{\mb X}$
produced by Algorithm \ref{alg:TwoStageMethod} obeys 
\begin{align*}
\left\Vert \widehat{\mb X}-\mb X^{\star}\right\Vert _{F} & \leq\frac{C_{2}\varepsilon}{\sqrt{n}},
\end{align*}
 for all rank-one and $k\times k$-sparse matrices $\mb X^{\star}\succcurlyeq\mb 0$
with probability exceeding $1-3e^{-c_{2}m}$ for some constant $c_{2}>0$.\end{cor}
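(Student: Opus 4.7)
The plan is to reduce Corollary \ref{cor:Gaussian-A} to Theorem \ref{thm:master-theorem} by checking that the assumed Gaussian $\mb\varPsi$ satisfies Assumption \ref{asm:A2} with the required $\delta_{2k}$, with high probability, and then taking a union bound with the probabilistic guarantee of Theorem \ref{thm:master-theorem}.

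First, I would invoke the standard concentration result for Gaussian measurement matrices: for any prescribed $\delta_{2k}\in(0,1)$, a matrix $\mb\varPsi\in\mbb R^{m\times d}$ with iid $\msf N(0,\tfrac{1}{m})$ entries obeys the $2k$-restricted isometry inequality \eqref{eq:RIP} with failure probability at most $2e^{-c\, m}$, provided $m\geq c_1 k\log\frac{d}{k}$ for an absolute constant $c_1$ depending only on $\delta_{2k}$. This is the Baraniuk--Davenport--DeVore--Wakin bound, derived from concentration of Lipschitz functions of Gaussians combined with a net argument over the $\binom{d}{2k}$ coordinate subspaces. Since Theorem \ref{thm:master-theorem} only requires $\delta_{2k}$ to be below some fixed absolute threshold, $c_1$ can be chosen as an absolute constant.

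Next, I would condition on this RIP event. On this event, Assumption \ref{asm:A2} holds deterministically, and Assumptions \ref{asm:A1} and \ref{asm:A3} are inherited from the corollary's hypotheses. Theorem \ref{thm:master-theorem} therefore yields
\begin{align*}
\left\Vert \widehat{\mb X}-\mb X^{\star}\right\Vert _{F} & \leq\frac{C_{2}\varepsilon}{\sqrt{n}},
\end{align*}
uniformly over all rank-one, $k\times k$-sparse PSD targets, with conditional probability at least $1-e^{-C_{3}n}$ with respect to the $\mb w_i$.

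Finally, I would combine the two estimates by a union bound. Using the hypothesis $n\geq C_1 m$ from \eqref{eq:Low-rank-|Measurements|}, we have $e^{-C_3 n}\leq e^{-C_3 C_1 m}$, so the total failure probability is at most
\begin{align*}
2e^{-c m}+e^{-C_3 C_1 m}&\leq 3e^{-c_{2}m},
\end{align*}
with $c_2=\min(c,\,C_3 C_1)$. Independence of $\mb\varPsi$ from the $\mb w_i$ makes the union bound immediate. There is no real obstacle here: the corollary is essentially a bookkeeping step that specializes Theorem \ref{thm:master-theorem} by naming an explicit ensemble for $\mb\varPsi$. The only point requiring any care is matching the constants so that the $c_1$ from the RIP bound produces a $\delta_{2k}$ small enough for Theorem \ref{thm:master-theorem}, and ensuring the three exponential terms can be collapsed into a single $3e^{-c_2 m}$.
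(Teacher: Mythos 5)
Your proposal is correct and follows essentially the same route as the paper: invoke the standard covering/union-bound RIP result of Baraniuk et al.\ for the iid $\msf N(0,\frac{1}{m})$ ensemble to get \eqref{eq:RIP} with failure probability $2e^{-cm}$, then apply Theorem \ref{thm:master-theorem} and combine the failure probabilities using $n\geq C_{1}m$ to obtain the bound $1-3e^{-c_{2}m}$. The paper's own proof is exactly this bookkeeping argument, including the same final inequality $1-2e^{-cm}-e^{-C_{3}n}\geq1-3e^{-c_{2}m}$.
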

\begin{proof}
It is well-known that if $\mb{\varPsi}$ has iid $\msf N\left(0,\frac{1}{m}\right)$
and we have (\ref{eq:Sparse-|Measurements|}) then (\ref{eq:RIP})
holds with high probability. For example, using a standard covering
argument and a union bound \cite{Baraniuk_Simple_2008} shows that
if (\ref{eq:Sparse-|Measurements|}) holds for a sufficiently large
constant $c_{1}>0$ then we have (\ref{eq:RIP}) for a sufficiently
small constant $\delta_{2k}$ with probability exceeding $1-2e^{-cm}$
for some constant $c>0$ that depends only on $\delta_{2k}$. Therefore,
Theorem \ref{thm:master-theorem} yields the desired result which
holds with probability exceeding $1-2e^{-cm}-e^{-C_{3}n}\geq1-3e^{-c_{2}m}$
for some constant $c_{2}>0$ depending only on $\delta_{2k}$.
\end{proof}

\section{Numerical Experiments}

We evaluated the performance of Algorithm \ref{alg:TwoStageMethod}
through some numerical simulations. The low-rank estimation stage
and the sparse estimation stage are implemented using the TFOCS package
\cite{Becker_Templates_2011}. We considered the target $k$-sparse
signal $\mb x^{\star}$ to be in $\mbb R^{256}$ (i.e., $d=256$).
The support set of of the target signal is selected uniformly at random
and the entry values on this support are drawn independently from
$\msf N\left(0,1\right)$. The noise vector $\mb z$ is also Gaussian
with independent $\msf N\left(0,10^{-4}\right)$. The operator $\mc W$
and the matrix $\mb{\varPsi}$ are drawn from some Gaussian ensembles
as described in Corollary \ref{cor:Gaussian-A}. We measured the relative
error $\frac{\left\Vert \widehat{\mb X}-\mb X^{\star}\right\Vert _{F}}{\left\Vert \mb X^{\star}\right\Vert _{F}}$
of achieved by the compared methods over 100 trials with sparsity
level (i.e., $k$) varying in the set $\left\{ 2,4,6,\dotsc,20\right\} $.

In the first experiment, for each value of $k$, the pair $\left(m,n\right)$
that determines the size $\mc W$ and $\mb{\varPsi}$ are selected
from $\left\{ \left(8k,24k\right),\left(8k,32k\right),\left(12k,36k\right),\left(12k,48k\right),\left(16k,48k\right)\right\} $.
Figure \ref{fig:REvsK} illustrates the $0.9$ quantiles of the relative
error versus $k$ for the mentioned choices of $m$.
\begin{figure}
\noindent\centering\includegraphics[width=0.8\textwidth]{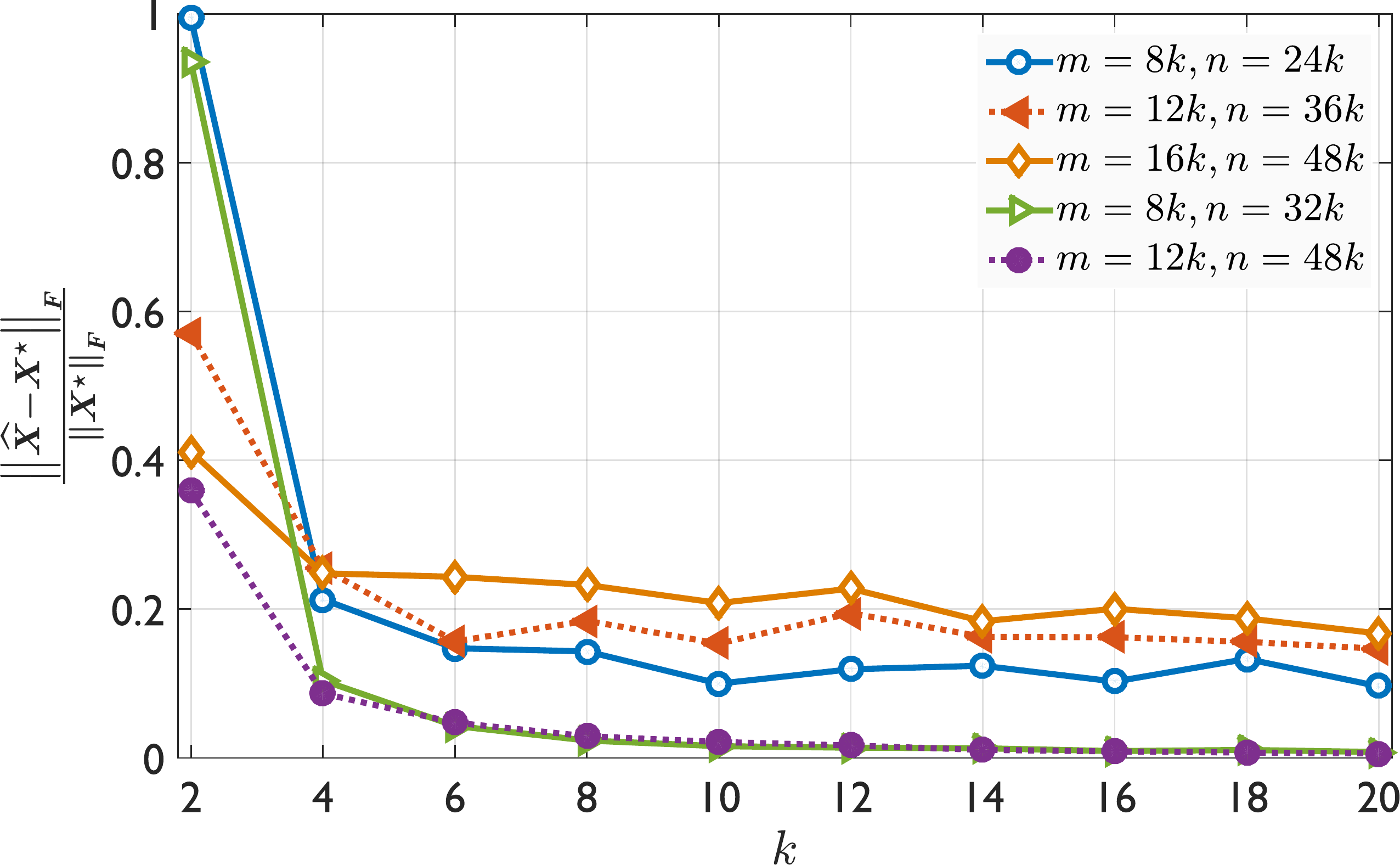}
\caption{\label{fig:REvsK}The empirical 0.9 quantile of the relative estimation error vs. sparsity for various choices of $m$ and $n$ with $d=256$.}
\end{figure}

In the second experiment we compared the performance of Algorithm
\ref{alg:TwoStageMethod} to the convex optimization methods that
do not exploit the structure of the sensing vectors. The setup for
this experiment is the same as in the first experiment except for
the size of $\mc W$ and $\mb{\varPsi}$; we chose $m=\left\lceil 2k\left(1+\log\frac{d}{k}\right)\right\rceil $
and $n=3m$, where $\left\lceil r\right\rceil $ denotes the smallest
integer greater than $r$. Figure \ref{fig:comparison} illustrates
the $0.9$ quantiles of the measured relative errors for Algorithm
\ref{alg:TwoStageMethod}, the semidefinite program (\ref{eq:trace+L1})
for $\lambda=0$ and $\lambda=0.2$, and the $\ell_{1}$-minimization
\begin{align*}
\begin{aligned} & \argmin_{\mb X} &  & \left\Vert \mb X\right\Vert _{1}\\
 & \text{subject to} &  & \mc A\left(\mb X\right)=\mb y,
\end{aligned}
\end{align*}
which are denoted by 2-stage, SDP, SDP+$\ell_{1}$, and $\ell_{1}$,
respectively. The SDP-based method did not perform significantly different for other values of $\lambda$ in our complementary simulations. The relative error for each trial is also overlaid in
Figure \ref{fig:comparison} visualize its empirical distribution.
The empirical performance of the algorithms are in agreement with
the theoretical results. Namely in a regime where $n=\msf O\left(m\right)=\msf O\left(k\,\log\frac{d}{k}\right)$,
Algorithm \ref{alg:TwoStageMethod} can produce accurate estimates
whereas while the other approaches fail in this regime. The SDP and
SDP+$\ell_{1}$ show nearly identical performance. The $\ell_{1}$-minimization,
however, competes with Algorithm \ref{alg:TwoStageMethod} for small
values of $k$. This observation can be explained intuitively by the
fact that the $\ell_{1}$-minimization succeeds with $n=\msf O\left(k^{2}\right)$
measurements which for small values of $k$ can be sufficiently close
to the considered $n=3\left\lceil 2k\left(1+\log\frac{d}{k}\right)\right\rceil $
measurements.
\begin{figure}
\noindent\centering\includegraphics[width=0.8\textwidth]{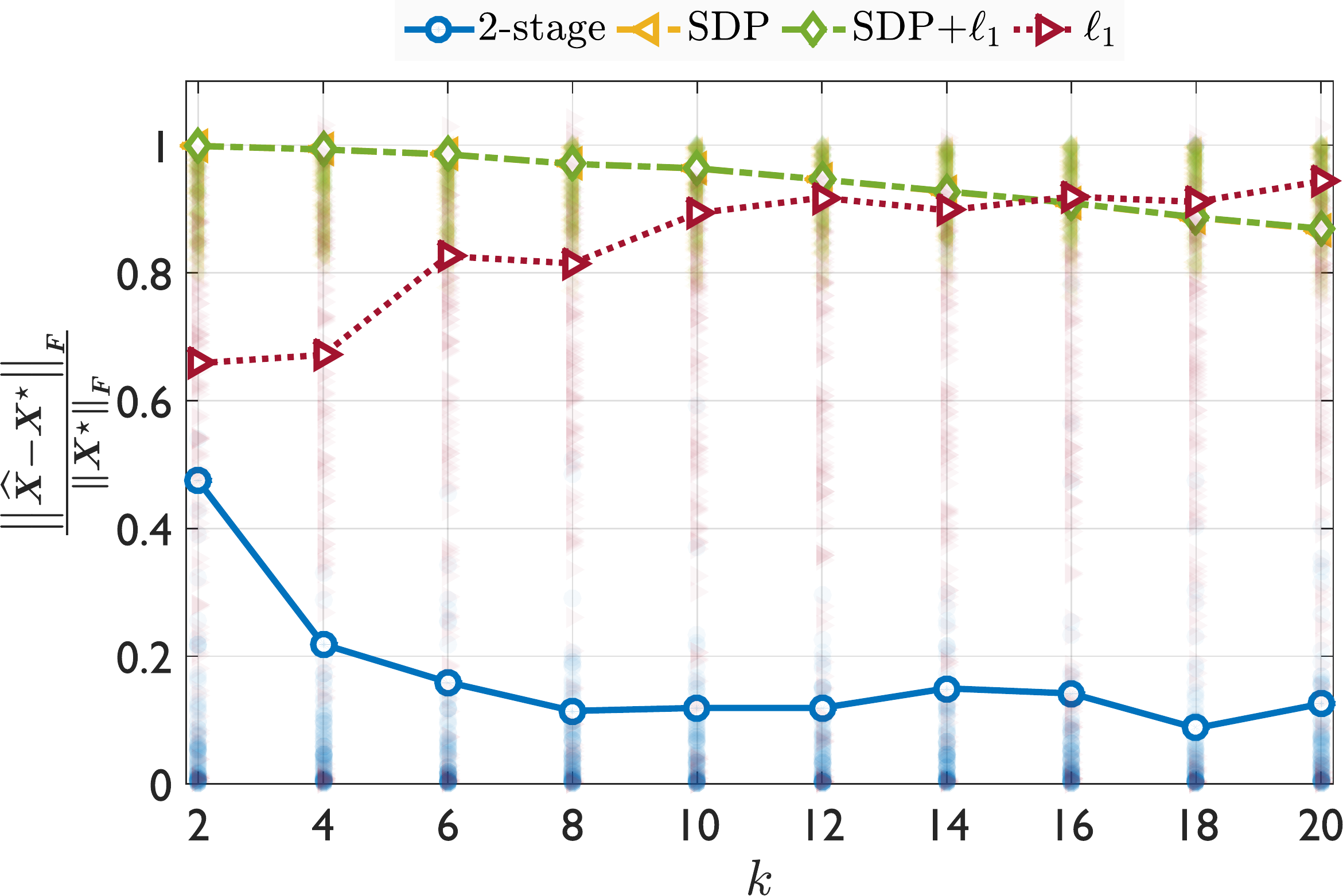}
\caption{\label{fig:comparison}The empirical 0.9 quantile of the relative estimation error vs. sparsity for Algorithm \ref{alg:TwoStageMethod}
and different $\protect\msf{trace}$- and/or $\ell_{1}$- minimization methods with $d=256$, $m=\left\lceil 2k\left(1+\log\frac{d}{k}\right)\right\rceil $,
and $n=3m$.}
\end{figure}

\section{\label{sec:Proofs}Proofs}
\begin{proof}[Proof of Theorem \ref{thm:master-theorem}]
 Clearly, $\mb B^{\star}=\mb{\varPsi}\mb X^{\star}\mb{\varPsi}^{\msf T}$
is feasible in \ref{eq:pre-estimate} because of \ref{asm:A3}. Therefore,
we can show that any solution $\widehat{\mb B}$ of (\ref{eq:pre-estimate})
accurately estimates $\mb B^{\star}$ using existing results on nuclear-norm
minimization. In particular, we can invoke \cite[Theorem 2 and Section 4.3]{Kueng_Low-rank_2014}
which guarantees that for some positive absolute constants $C_{1}$,
$C'_{2}$, and $C_{3}$ if (\ref{eq:Low-rank-|Measurements|}) holds
then 
\begin{align*}
\left\Vert \widehat{\mb B}-\mb B^{\star}\right\Vert _{F} & \leq\frac{C'_{2}\varepsilon}{\sqrt{n}},
\end{align*}
 holds for all valid $\mb B^{\star}$ , thereby for all valid $\mb X^{\star}$,
with probability exceeding $1-e^{-C_{3}n}$. Therefore, with $C=C'_{2}$,
the target matrix $\mb X^{\star}$ would be feasible in (\ref{eq:estimate}).
Now, it suffices to show that the sparse estimation stage can produce
an accurate estimate of $\mb X^{\star}$. Recall that by \ref{asm:A2},
the matrix $\mb{\varPsi}$ is restricted isometry for $2k$-sparse
vectors. Let $\mb X$ be a matrix that is $2k\times2k$-sparse, i.e.,
a matrix whose entries except for some $2k\times2k$ submatrix are
all zeros. Applying (\ref{eq:RIP}) to the columns of $\mb X$ and
adding the inequalities yield 
\begin{align}
\left(1-\delta_{2k}\right)\left\Vert \mb X\right\Vert _{F}^{2} & \begin{aligned}\leq\left\Vert \mb{\varPsi}\mb X\right\Vert _{F}^{2} & \leq\left(1+\delta_{2k}\right)\left\Vert \mb X\right\Vert _{F}^{2}.\end{aligned}
\label{eq:PsiX}
\end{align}
 Because the columns of $\mb X^{\msf T}\mb{\varPsi}^{\msf T}$ are
also $2k$-sparse we can repeat the same argument and obtain 
\begin{align}
\left(1-\delta_{2k}\right)\left\Vert \mb X^{\msf T}\mb{\varPsi}^{\msf T}\right\Vert _{F}^{2} & \begin{aligned}\leq\left\Vert \mb{\varPsi}\mb X^{\msf T}\mb{\varPsi}^{\msf T}\right\Vert _{F}^{2} & \leq\left(1+\delta_{2k}\right)\left\Vert \mb X^{\msf T}\mb{\varPsi}^{\msf T}\right\Vert _{F}^{2}.\end{aligned}
\label{eq:PsiXPsi'}
\end{align}
 Using the facts that $\left\Vert \mb X^{\msf T}\mb{\varPsi}^{\msf T}\right\Vert _{F}=\left\Vert \mb{\varPsi}\mb X\right\Vert _{F}$
and $\left\Vert \mb{\varPsi}\mb X^{\msf T}\mb{\varPsi}^{\msf T}\right\Vert _{F}=\left\Vert \mb{\varPsi}\mb X\mb{\varPsi}^{\msf T}\right\Vert _{F}$,
the inequalities (\ref{eq:PsiX}) and (\ref{eq:PsiXPsi'}) imply that
\begin{align}
\left(1-\delta_{2k}\right)^{2}\left\Vert \mb X\right\Vert _{F}^{2} & \begin{aligned}\leq\left\Vert \mb{\varPsi}\mb X\mb{\varPsi}^{\msf T}\right\Vert _{F}^{2} & \leq\left(1+\delta_{2k}\right)^{2}\left\Vert \mb X\right\Vert _{F}^{2}\end{aligned}
.\label{eq:RIP-PsiPsi'}
\end{align}
 The proof proceeds with an adaptation of the arguments used to prove
accuracy of $\ell_{1}$-minimization in compressive sensing based
on the restricted isometry property (see, e.g., \cite{Candes_Restricted_2008}).
Let $\mb E=\widehat{\mb X}-\mb X^{\star}$. Furthermore, let $S_{0}\subseteq\left[d\right]\times\left[d\right]$
denote the support set of the $k\times k$-sparse target $\mb X^{\star}$.
Define $\mb E_{0}$ to be a $d\times d$ matrix that is identical
to $\mb E$ over the index set $S_{0}$ and zero elsewhere. By optimality
of $\widehat{\mb X}$ and feasibility of $\mb X^{\star}$ in (\ref{eq:estimate})
we have 
\begin{align*}
\left\Vert \mb X^{\star}\right\Vert _{1} & \geq\left\Vert \widehat{\mb X}\right\Vert _{1}=\left\Vert \mb X^{\star}+\mb E-\mb E_{0}+\mb E_{0}\right\Vert _{1} \geq\left\Vert \mb X^{\star}+\mb E-\mb E_{0}\right\Vert _{1}-\left\Vert \mb E_{0}\right\Vert _{1}\\
 & =\left\Vert \mb X^{\star}\right\Vert _{1}+\left\Vert \mb E-\mb E_{0}\right\Vert _{1}-\left\Vert \mb E_{0}\right\Vert _{1},
\end{align*}
where the last line follows from the fact that $\mb X^{\star}$ and
$\mb E-\mb E_{0}$ have disjoint supports. Thus, we have 
\begin{align}
\left\Vert \mb E-\mb E_{0}\right\Vert _{1} & \leq\left\Vert \mb E_{0}\right\Vert _{1}\leq k\left\Vert \mb E_{0}\right\Vert _{F}.\label{eq:Delta-tail}
\end{align}
 Now consider a decomposition of $\mb E-\mb E_{0}$ as the sum 
\begin{align}
\mb E-\mb E_{0} & =\sum_{j=1}^{J}\mb E_{j},\label{eq:Decomposition-Delta}
\end{align}
 such that for $j\geq0$ the $d\times d$ matrices $\mb E_{j}$ have
disjoint support sets of size $k\times k$ except perhaps for the
last few matrices that might have smaller supports. More importantly,
the partitioning matrices $\mb E_{j}$ are chosen to have a decreasing
Frobenius norm (i.e., $\left\Vert \mb E_{j}\right\Vert _{F}\geq\left\Vert \mb E_{j+1}\right\Vert _{F}$)
for $j\geq1$. We have 
\begin{align}
\left\Vert \sum_{j=2}^{J}\mb E_{j}\right\Vert _{F} & \leq\sum_{j=2}^{J}\left\Vert \mb E_{j}\right\Vert _{F}\leq\frac{1}{k}\sum_{j=2}^{J}\left\Vert \mb E_{j-1}\right\Vert _{1}\leq\frac{1}{k}\left\Vert \mb E-\mb E_{0}\right\Vert _{1}\leq\left\Vert \mb E_{0}\right\Vert _{F}\leq\left\Vert \mb E_{0}+\mb E_{1}\right\Vert _{F},\label{eq:tail-less-head}
\end{align}
where the chain of inequalities follow from the triangle inequality,
the fact that $\left\Vert \mb E_{j}\right\Vert _{\infty}\leq\frac{1}{k^{2}}\left\Vert \mb E_{j-1}\right\Vert _{1}$
by construction, the fact that the matrices $\mb E_{j}$ have disjoint
support and satisfy (\ref{eq:Decomposition-Delta}), the bound (\ref{eq:Delta-tail}),
and the fact that $\mb E_{0}$ and $\mb E_{1}$ are orthogonal. Furthermore,
we have 
\begin{align}
\left\Vert \mb{\varPsi}\left(\mb E_{0}+\mb E_{1}\right)\mb{\varPsi}^{\msf T}\right\Vert _{F}^{2} & =\left\langle \mb{\varPsi}\left(\mb E_{0}+\mb E_{1}\right)\mb{\varPsi}^{\msf T},\mb{\varPsi}\left(\mb E-\sum_{j=2}^{J}\mb E_{j}\right)\mb{\varPsi}^{\msf T}\right\rangle \nonumber \\
 & \leq\left\Vert \mb{\varPsi}\left(\mb E_{0}+\mb E_{1}\right)\mb{\varPsi}^{\msf T}\right\Vert _{F}\left\Vert \mb{\varPsi}\mb E\mb{\varPsi}^{\msf T}\right\Vert _{F}+\sum_{i=0}^{1}\sum_{j=2}^{J}\left|\left\langle \mb{\varPsi}\mb E_{i}\mb{\varPsi}^{\msf T},\mb{\varPsi}\mb E_{j}\mb{\varPsi}^{\msf T}\right\rangle \right|,\label{eq:Psi(E0+E1)Psi'}
\end{align}
where the first term is obtained by the Cauchy-Schwarz inequality
and the summation is obtained by the triangle inequality. Because
$\mb E=\widehat{\mb X}-\mb X^{\star}$ by definition, the triangle
inequality and the fact that $\mb X^{\star}$ and $\widehat{\mb X}$
are feasible in (\ref{eq:estimate}) imply that $\left\Vert \mb{\varPsi}\mb E\mb{\varPsi}^{\msf T}\right\Vert _{F}\leq\left\Vert \mb{\varPsi}\widehat{\mb X}\mb{\varPsi}^{\msf T}-\widehat{\mb B}\right\Vert _{F}+\left\Vert \mb{\varPsi}\mb X^{\star}\mb{\varPsi}^{\msf T}-\widehat{\mb B}\right\Vert _{F}\leq\frac{2C\varepsilon}{\sqrt{n}}$.
Furthermore, Lemma \ref{lem:PsiInnerProduct} below which is adapted
from \cite[Lemma 2.1]{Candes_Restricted_2008} guarantees that for
$i\in\left\{ 0,1\right\} $ and $j\geq2$ we have $\left|\left\langle \mb{\varPsi}\mb E_{i}\mb{\varPsi}^{\msf T},\mb{\varPsi}\mb E_{j}\mb{\varPsi}^{\msf T}\right\rangle \right|\leq2\delta_{2k}\left\Vert \mb E_{i}\right\Vert _{F}\left\Vert \mb E_{j}\right\Vert _{F}$.
Therefore, we obtain 
\begin{align*}
\left(1-\delta_{2k}\right)^{2}\left\Vert \mb E_{0}+\mb E_{1}\right\Vert _{F}^{2} & \leq\left\Vert \mb{\varPsi}\left(\mb E_{0}+\mb E_{1}\right)\mb{\varPsi}^{\msf T}\right\Vert _{F}^{2}\\
 & \leq\frac{2C\varepsilon}{\sqrt{n}}\left\Vert \mb{\varPsi}\left(\mb E_{0}+\mb E_{1}\right)\mb{\varPsi}^{\msf T}\right\Vert _{F}+2\delta_{2k}\sum_{i=0}^{1}\sum_{j=2}^{J}\left\Vert \mb E_{i}\right\Vert _{F}\left\Vert \mb E_{j}\right\Vert _{F}\\
 & \leq\frac{2C\varepsilon}{\sqrt{n}}\left(1+\delta_{2k}\right)\left\Vert \mb E_{0}+\mb E_{1}\right\Vert _{F}+2\delta_{2k}\sum_{i=0}^{1}\sum_{j=2}^{J}\left\Vert \mb E_{i}\right\Vert _{F}\left\Vert \mb E_{j}\right\Vert _{F}\\
 & \leq\frac{2C\varepsilon}{\sqrt{n}}\left(1+\delta_{2k}\right)\left\Vert \mb E_{0}+\mb E_{1}\right\Vert _{F}+2\delta_{2k}\left(\left\Vert \mb E_{0}\right\Vert _{F}+\left\Vert \mb E_{1}\right\Vert _{F}\right)\left\Vert \mb E_{0}+\mb E_{1}\right\Vert _{F}\\
 & \leq\left\Vert \mb E_{0}+\mb E_{1}\right\Vert _{F}\left(\frac{2C\varepsilon}{\sqrt{n}}\left(1+\delta_{2k}\right)+2\sqrt{2}\delta_{2k}\left\Vert \mb E_{0}+\mb E_{1}\right\Vert _{F}\right)
\end{align*}
 where the chain of inequalities follow from the lower bound in (\ref{eq:RIP-PsiPsi'}),
the bound (\ref{eq:Psi(E0+E1)Psi'}), the upper bound in (\ref{eq:RIP-PsiPsi'}),
the bound (\ref{eq:tail-less-head}), and the fact that $\left\Vert \mb E_{0}\right\Vert _{F}+\left\Vert \mb E_{1}\right\Vert _{F}\leq\sqrt{2}\left\Vert \mb E_{0}+\mb E_{1}\right\Vert _{F}$.
If $\delta_{2k}<1+\sqrt{2}\left(1-\sqrt{1+\sqrt{2}}\right)\approx0.216$,
then we have $\gamma:=\left(1-\delta_{2k}\right)^{2}-2\sqrt{2}\delta_{2k}>0$
and thus 
\begin{align*}
\left\Vert \mb E_{0}+\mb E_{1}\right\Vert _{F} & \leq\frac{2C\left(1+\delta_{2k}\right)\varepsilon}{\gamma\sqrt{n}}.
\end{align*}
Adding the above inequality to (\ref{eq:Delta-tail}) and applying
the triangle then yields the desired result.\end{proof}
\begin{lem}
\label{lem:PsiInnerProduct}Let $\mb{\varPsi}$ be a matrix obeying
(\ref{eq:RIP}). Then for any pair of $k\times k$-sparse matrices
$\mb X$ and $\mb X'$ with disjoint supports we have 
\begin{align*}
\left|\left\langle \mb{\varPsi}\mb X\mb{\varPsi}^{\msf T},\mb{\varPsi}\mb X'\mb{\varPsi}^{\msf T}\right\rangle \right| & \leq2\delta_{2k}\left\Vert \mb X\right\Vert _{F}\left\Vert \mb X'\right\Vert _{F}.
\end{align*}
\end{lem}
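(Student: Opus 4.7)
The plan is to adapt the standard polarization argument behind \cite[Lemma 2.1]{Candes_Restricted_2008}, with the one-sided RIP constant $\delta_{2k}$ replaced by the two-sided effective constant that appears in (\ref{eq:RIP-PsiPsi'}).

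First I would observe that $\mb X$ and $\mb X'$ being $k\times k$-sparse with disjoint supports implies that $\mb X\pm\mb X'$ is $2k\times 2k$-sparse (the union of the row supports and the union of the column supports each have size at most $2k$), and that $\|\mb X\pm\mb X'\|_F^2=\|\mb X\|_F^2+\|\mb X'\|_F^2$ by orthogonality of disjointly supported matrices in the Frobenius inner product. By homogeneity, it suffices to treat the case $\|\mb X\|_F=\|\mb X'\|_F=1$, so that $\|\mb X\pm\mb X'\|_F^2=2$.

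Next I would apply the matrix RIP-type bound (\ref{eq:RIP-PsiPsi'}) already established in the proof of Theorem \ref{thm:master-theorem}, which is valid for any $2k\times 2k$-sparse matrix under assumption \ref{asm:A2}, to the two $2k\times 2k$-sparse matrices $\mb X+\mb X'$ and $\mb X-\mb X'$. This gives
\begin{align*}
2(1-\delta_{2k})^2 \le \left\Vert \mb\varPsi(\mb X\pm\mb X')\mb\varPsi^{\msf T}\right\Vert_F^2 \le 2(1+\delta_{2k})^2.
\end{align*}
Combined with the polarization identity
\begin{align*}
4\left\langle \mb\varPsi \mb X \mb\varPsi^{\msf T},\,\mb\varPsi \mb X' \mb\varPsi^{\msf T}\right\rangle = \left\Vert \mb\varPsi(\mb X+\mb X')\mb\varPsi^{\msf T}\right\Vert_F^2 - \left\Vert \mb\varPsi(\mb X-\mb X')\mb\varPsi^{\msf T}\right\Vert_F^2,
\end{align*}
the absolute value of the left-hand side is bounded by $2(1+\delta_{2k})^2-2(1-\delta_{2k})^2=8\delta_{2k}$, yielding $|\langle\mb\varPsi \mb X\mb\varPsi^{\msf T},\mb\varPsi \mb X'\mb\varPsi^{\msf T}\rangle|\le 2\delta_{2k}$ in the normalized case. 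Rescaling by $\|\mb X\|_F\|\mb X'\|_F$ recovers the stated bound.

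There is no real obstacle here; the only point worth double-checking is the sparsity bookkeeping, namely that $\mb X\pm\mb X'$ stays within the $2k\times 2k$-sparse regime for which (\ref{eq:RIP-PsiPsi'}) has already been proved, so that we never need to invoke RIP for sparsity levels larger than $2k$. The factor $2\delta_{2k}$ (rather than the $\delta_{2k}$ of the classical vector lemma) is an intrinsic consequence of the two-sided action of $\mb\varPsi$ on $\mb X\mb\varPsi^{\msf T}$, which costs one power of $(1\pm\delta_{2k})$ on each side.
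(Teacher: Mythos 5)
Your proposal is correct and follows essentially the same route as the paper's own proof: polarization of the Frobenius inner product, the two-sided bound (\ref{eq:RIP-PsiPsi'}) applied to the $2k\times 2k$-sparse matrices $\mb X\pm\mb X'$, the identity $\left\Vert \mb X\pm\mb X'\right\Vert _{F}^{2}=2$ from disjoint supports, and a homogeneity argument. The arithmetic $\bigl(2(1+\delta_{2k})^{2}-2(1-\delta_{2k})^{2}\bigr)/4=2\delta_{2k}$ matches the paper's computation exactly.
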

\begin{proof}
Suppose that $\mb X$ and $\mb X'$ have unit Frobenius norm. Using
the identity $\left\langle \mb{\varPsi}\mb X\mb{\varPsi}^{\msf T},\mb{\varPsi}\mb X'\mb{\varPsi}^{\msf T}\right\rangle =\frac{1}{4}\left(\left\Vert \mb{\varPsi}\left(\mb X+\mb X'\right)\mb{\varPsi}^{\msf T}\right\Vert _{F}^{2}-\left\Vert \mb{\varPsi}\left(\mb X-\mb X'\right)\mb{\varPsi}^{\msf T}\right\Vert _{F}^{2}\right)$
and the fact that $\mb X$ and $\mb X'$ have disjoint supports, it
follows from (\ref{eq:RIP-PsiPsi'}) that 
\begin{align*}
-2\delta_{2k}=\frac{\left(1-\delta_{2k}\right)^{2}-\left(1+\delta_{2k}\right)^{2}}{2} & \leq\left\langle \mb{\varPsi}\mb X\mb{\varPsi}^{\msf T},\mb{\varPsi}\mb X'\mb{\varPsi}^{\msf T}\right\rangle \leq\frac{\left(1+\delta_{2k}\right)^{2}-\left(1-\delta_{2k}\right)^{2}}{2}=2\delta_{2k}.
\end{align*}
 The general result follows immediately as the desired inequality
is homogeneous in the Frobenius norms of $\mb X$ and $\mb X'$.\end{proof}
\begin{lem}[Projected estimator]
\label{lem:projected-estimator} Let $S$ be a closed nonempty subset
of a normed vector space $\left(\mbb V,\left\Vert \cdot\right\Vert \right)$.
Suppose that for $\mb v^{\star}\in S$ we have an estimator $\widehat{\mb v}\in\mbb V$,
not necessarily in $S$, that obeys $\left\Vert \widehat{\mb v}-\mb v^{\star}\right\Vert \leq\epsilon$.
If $\widetilde{\mb v}$ denotes a projection of $\widehat{\mb v}$
onto $S$, then we have $\left\Vert \widetilde{\mb v}-\mb v^{\star}\right\Vert \leq2\epsilon$.
\end{lem}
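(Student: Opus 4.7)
The plan is a direct two-line argument based on the definition of a metric projection combined with the triangle inequality. The key observation is that the projection $\widetilde{\mb v}$ of $\widehat{\mb v}$ onto $S$ must be at least as close to $\widehat{\mb v}$ as $\mb v^\star$ is, simply because $\mb v^\star$ is a feasible competitor in the projection's definition.

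First I would invoke the defining property of the projection: since $\widetilde{\mb v}$ is a minimizer of $\left\Vert \widehat{\mb v} - \mb s \right\Vert$ over $\mb s \in S$, and $\mb v^\star \in S$ is one such candidate, we have $\left\Vert \widetilde{\mb v} - \widehat{\mb v} \right\Vert \leq \left\Vert \mb v^\star - \widehat{\mb v} \right\Vert \leq \epsilon$. (This uses only that $S$ is nonempty so the infimum is meaningful, and implicitly that a projection exists; the hypothesis that $S$ is closed is what guarantees the set of projections is nonempty in a sufficiently nice normed space, though the inequality argument itself only needs that $\widetilde{\mb v}$ achieves or approximates the infimum.)

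Then I would apply the triangle inequality:
\begin{align*}
\left\Vert \widetilde{\mb v} - \mb v^\star \right\Vert & \leq \left\Vert \widetilde{\mb v} - \widehat{\mb v} \right\Vert + \left\Vert \widehat{\mb v} - \mb v^\star \right\Vert \leq \epsilon + \epsilon = 2\epsilon,
\end{align*}
which is the desired bound.

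There is essentially no obstacle here; the lemma is a standard fact about metric projections onto closed sets, and the only subtlety is a notational one, namely making clear that $\widetilde{\mb v}$ denotes any element achieving the projection (the statement of the lemma says "\emph{a} projection," which accommodates the case where the projection is not unique, as may happen when $S$ is nonconvex — for instance, the sets of rank-one or of $k \times k$-sparse matrices relevant to the post-processing step in Algorithm \ref{alg:TwoStageMethod}).
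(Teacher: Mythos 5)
Your proof is correct and is essentially identical to the paper's: both use the minimizing property of the projection to bound $\left\Vert \widetilde{\mb v}-\widehat{\mb v}\right\Vert \leq\left\Vert \widehat{\mb v}-\mb v^{\star}\right\Vert$ and then conclude by the triangle inequality. Your additional remarks on existence and non-uniqueness of the projection are sound but not needed beyond what the paper assumes.
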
%
\begin{proof}
By definition $\widetilde{\mb v}\in\argmin_{\mb v\in S}\left\Vert \mb v-\widehat{\mb v}\right\Vert .$
Therefore, because $\mb v^{\star}\in S$ we have
\begin{align*}
\left\Vert \widetilde{\mb v}-\mb v^{\star}\right\Vert  & \leq\left\Vert \widehat{\mb v}-\mb v^{\star}\right\Vert +\left\Vert \widetilde{\mb v}-\widehat{\mb v}\right\Vert \leq2\left\Vert \widehat{\mb v}-\mb v^{\star}\right\Vert \leq2\epsilon.
\end{align*}
\end{proof}
\subsection*{Acknowledgements}
This work was supported by ONR grant N00014-11-1-0459, and NSF grants CCF-1415498 and CCF-1422540.

\subsection*{References}
\bibliographystyle{unsrt}
{\def\section*#1{}
\small
\bibliography{references}
}
\end{document}